\documentclass[11pt]{article}
\usepackage[margin=1.1in]{geometry}
\usepackage[round]{natbib}
\usepackage{amsmath, amssymb, enumitem, bm, bbm, multirow, hyperref, graphicx, authblk, tikz, amsthm, listings}
\usepackage{booktabs}
\usepackage{threeparttable}

\usepackage[noabbrev]{cleveref}
\usepackage{thmtools,thm-restate}

\makeatletter
\newcommand{\refcheckize}[1]{%
  \expandafter\let\csname @@\string#1\endcsname#1%
  \expandafter\DeclareRobustCommand\csname relax\string#1\endcsname[1]{%
    \csname @@\string#1\endcsname{##1}\@for\@temp:=##1\do{\wrtusdrf{\@temp}\wrtusdrf{{\@temp}}}}%
  \expandafter\let\expandafter#1\csname relax\string#1\endcsname
}
\newcommand{\refcheckizetwo}[1]{%
  \expandafter\let\csname @@\string#1\endcsname#1%
  \expandafter\DeclareRobustCommand\csname relax\string#1\endcsname[2]{%
    \csname @@\string#1\endcsname{##1}{##2}\wrtusdrf{##1}\wrtusdrf{{##1}}\wrtusdrf{##2}\wrtusdrf{{##2}}}%
  \expandafter\let\expandafter#1\csname relax\string#1\endcsname
}
\makeatother

\hypersetup{colorlinks, breaklinks,urlcolor= magenta, linkcolor= blue, citecolor = magenta}

\newcommand{\E}{\mathbb{E}}

\newcommand{\Var}{\text{Var}}
\newcommand{\Cov}{\text{Cov}}

\newcommand{\indep}{\perp \!\!\! \perp}
\allowdisplaybreaks
\makeatletter
\renewcommand\AB@affilsepx{, \protect\Affilfont}
\makeatother

\title{The Role of Measured Covariates in Assessing Sensitivity to Unmeasured Confounding}
\author[1]{Abhinandan Dalal*}
\author[1]{Iris Horng*}
\author[2]{Yang Feng}
\author[1]{Dylan S. Small}
\affil[1]{University of Pennsylvania}
\affil[2]{New York University}

\date{}

\newtheorem{assumption}{Assumption}[section]

\newtheorem{proposition}{Proposition}

\newcommand{\xs}{\fontsize{8}{9}\selectfont}
\lstdefinestyle{rout}{
  basicstyle=\ttfamily\xs,
  columns=fixed,
  keepspaces=true,
  breaklines=false,
  breakatwhitespace=false,
  showstringspaces=false,
  frame=single,
  framerule=0.3pt,
  rulecolor=\color{black!30},
  backgroundcolor=\color{black!3},
  xleftmargin=0pt,
  framexleftmargin=4pt,
  framesep=3pt,
  basewidth=0.5em
}

\begin{document}
\maketitle

\begin{abstract}
    Sensitivity analysis is widely used to assess the robustness of causal conclusions in observational studies, yet its interaction with the structure of measured covariates is often overlooked. When latent confounders cannot be directly adjusted for and are instead controlled using proxy variables, strong associations between exposure and measured proxies can amplify sensitivity to residual confounding. We formalize this phenomenon in linear regression settings by showing that a simple ratio involving the exposure model coefficient and residual exposure variance provides an observable measure of this increased sensitivity. Applying our framework to smoking and lung cancer, we document how growing socioeconomic stratification in smoking behavior over time leads to heightened sensitivity to unmeasured confounding in more recent data. These results highlight the importance of multicollinearity when interpreting sensitivity analyses based on proxy adjustment.
\end{abstract}

  \textbf{Keywords}: Sensitivity Analysis, Omitted Variable Bias, Multicollinearity, Proximal Learning.

\section{Introduction}

Confounding, especially unmeasured confounding, remains a central concern in observational studies. Few examples illustrate this more clearly than the long-standing debate over the relationship between smoking and lung cancer. The seminal paper by \citet{cornfield1959smoking} played a pivotal role in this discussion by examining how strong an unmeasured confounder would need to be to overturn a causal interpretation. Their analysis showed that any such factor would have to be implausibly powerful given the scientific and social context of the time. This conclusion was supported by a careful consideration of alternative explanations, including age, sex, marital status, diagnostic practices, urban residence, and environmental exposure, thereby providing compelling robustness for the causal link between smoking and lung cancer.

The success of sensitivity analysis is intricately tied to the quality of the measured covariates in an observational study \citep{hammond1964smoking}. Sensitivity analysis can strengthen the credibility of a well-designed study, but it cannot substitute for thoughtful covariate selection. The central role of measured confounders in the design of an observational study cannot be overstated and is pivotal in assessing the rigor of such an analysis. In fact, under reasonable conditions, measured covariates can even attenuate bias arising from unmeasured confounders \citep{zhang2025general}. In this regard, it is imperative to consider the effect of measured covariates on the sensitivity of causal conclusions.

In a correctly specified standard linear regression of an outcome $Y$ on an exposure $A$ and a measured confounder $X$, strong multicollinearity between $A$ and $X$ is traditionally viewed as a problem of statistical power \citep{farrar1967multicollinearity}. When variables are highly collinear, standard errors increase, and estimates become less precise, making it harder to detect treatment effects. 

A skeptic may argue that even under multicollinearity, effect estimates from a correctly specified linear model remain valid, and that observing a significant effect despite multicollinearity therefore strengthens causal evidence. However, this conclusion relies critically on the assumption of no unmeasured confounding. When such an assumption is questionable, the implications of strong multicollinearity between the exposure and measured confounders on the sensitivity of a causal claim remain unclear.

For instance, consider two studies as shown in Figure \ref{hypo_studies}, with an exposure $A$, a measured covariate $X$, and an outcome $Y$. Suppose $X$ has the same distribution in both studies and serves as a stable proxy for an unmeasured confounder $U$, as is common in observational studies. Figure \ref{hypo_studies} shows that the two studies report similar effect size estimates, standard errors, and sensitivity analyses (using the sensitivity model of \citet{cinelli2020making}). However, exposure is more strongly predicted by $X$ in Study 1 than in Study 2 (see Appendix B for the data-generating mechanisms). Analogous situations could arise if one instead used sensitivity models such as those in \citet{rosenbaum2010design} or \citet{tan2006distributional}. In this setting, does one study warrant greater confidence in its causal claims than the other?

\begin{figure}[!ht]
\centering
\label{hypo_studies}
\textbf{Covariate Exposure Relationship}
\includegraphics[width=0.99\linewidth]{exp_cov_relation.pdf}

\begin{minipage}{\linewidth}
  \centering
  \begin{minipage}[t]{0.48\linewidth}
    \textbf{Study 1 (R output)}\\[-0.3em]
    \begin{lstlisting}[style=rout]
Call:
lm(formula = y ~ a + x)

Residuals:
    Min      1Q  Median      3Q     Max 
 -4.236  -1.092  -0.031   1.018   4.769 

Coefficients:
             Estimate Std. Error t value Pr(>|t|)
(Intercept) -0.00160   0.04752   -0.034   0.9732
a            3.48138   0.05417   64.271  <2e-16**
x           -0.19174   0.09802   -1.956   0.0507

Residual standard error: 1.502 on 
    997 degrees of freedom
Multiple R-squared: 0.9523,  
Adjusted R-squared: 0.9522
F-statistic: 9953 on 2 and 997 DF,  
    p-value: < 2.2e-16

Sensitivity Analysis to Unobserved Confounding

Unadjusted Estimates of 'a':
  Coef. estimate: 3.48138
  Standard Error: 0.05417
  t-value: 64.27081

Sensitivity Statistics:
  Partial R2 of treatment with outcome: 0.80557
  Robustness Value (q = 1): 0.83266
  Robustness Value (q = 1, alpha = 0.05): 0.82515
    \end{lstlisting}
  \end{minipage}%
  \hspace{0.04\linewidth}%
  \begin{minipage}[t]{0.48\linewidth}
    \textbf{Study 2 (R output)}\\[-0.3em]
    \begin{lstlisting}[style=rout]
Call:
lm(formula = y ~ a + x)

Residuals:
    Min      1Q  Median      3Q     Max 
-4.0173 -0.9755 -0.0070  0.9448  3.8130 

Coefficients:
             Estimate Std. Error t value Pr(>|t|)
(Intercept) -0.01455   0.04312   -0.337   0.736
a            3.31667   0.05042   65.779  <2e-16**
x            1.42632   0.04153   34.348  <2e-16**

Residual standard error: 1.363 on 
    997 degrees of freedom
Multiple R-squared: 0.9024, 
Adjusted R-squared: 0.9022
F-statistic: 4609 on 2 and 997 DF,  
    p-value: < 2.2e-16

Sensitivity Analysis to Unobserved Confounding

Unadjusted Estimates of 'a':
  Coef. estimate: 3.31667
  Standard Error: 0.05042
  t-value: 65.7786

Sensitivity Statistics:
  Partial R2 of treatment with outcome: 0.81273
  Robustness Value (q = 1): 0.83813
  Robustness Value (q = 1, alpha = 0.05): 0.83096
    \end{lstlisting}
  \end{minipage}
\end{minipage}

\caption{Comparison of two hypothetical studies. (Top) Observed Exposure and Covariate Relationship (Bottom) Effect estimates and Sensitivity Analysis for Observed Outcome.}
\end{figure}

Observational studies in which the exposure is heavily confounded with other covariates have often been shown to yield misleading causal conclusions. Notable examples include early findings on the protective effects of hormone replacement therapy on coronary artery disease \citep{grodstein2001postmenopausal} and the adverse effects of caffeine consumption during pregnancy on birth weight \citep{vlajinac1997effect}, both of which were later overturned by experimental evidence. In each case, the exposure was strongly associated with measured differences in lifestyle, rendering the studies highly susceptible to unmeasured confounding. These examples serve as cautionary tales, underscoring the need for heightened skepticism when drawing causal conclusions from non-experimental data in settings where exposure is strongly confounded \citep{rutter2007identifying}.


In this article, we investigate this question further, studying how the relationship between measured covariates and exposure shapes the interpretation of sensitivity analyses in observational studies, particularly when these covariates proxy for latent confounders. Revisiting smoking as a motivating example, we show how evolving socioeconomic patterns in exposure uptake alter the sensitivity of causal claims over time. Together, these considerations offer a framework for a more nuanced interpretation of sensitivity analyses in the presence of strong exposure–covariate associations.

\section{Background}
\subsection{The measurement error problem in explanatory variables}
We begin by revisiting the standard linear regression setup where the explanatory variable is measured with error.
\begin{align}
    Y &= \beta_0 + \beta X^* + \varepsilon_Y; & X &= X^* + \varepsilon_X; & \varepsilon_Y\indep \varepsilon_X \label{model1}
\end{align}
It is well known that an ordinary least squares (OLS) regression coefficient $\beta_{Y\sim X}$ of $Y$ on $X$ in Model \eqref{model1} underestimates the true $\beta$, as given by Equation \eqref{atbias}.
\begin{equation}
    \beta_{Y\sim X} = \beta \dfrac{\Var(X^*)}{\Var(X^*) + \Var(\varepsilon_X)}. \label{atbias}
\end{equation}
This phenomenon is often referred to as attenuation bias in econometrics \citep{wooldridge2010econometric}, and it foreshadows the problem we address when there is strong multicollinearity between $A$ and $X$ in a regression, and when $X$ is itself associated with an unmeasured variable $U$. In such settings, the bias in estimating the slope of $A$ on $Y$ is affected by the association between $U$ and $X$, with direct implications for the sensitivity of the estimated coefficient to unmeasured confounding.

\subsection{Multicollinearity and the Role of Measured Covariates}
Consider the following structural equation
\begin{equation}
    Y = \beta_0 + \beta A + \theta_X X+ \varepsilon_Y, \ \\E[\varepsilon_Y|A,X] =0, \label{reg_model}
\end{equation}
where a linear regression of $Y$ on $A$ and $X$ identifies the structural parameter $\beta$. In this setting, the absence of unmeasured confounding implies that strong multicollinearity between the exposure $A$ and the measured covariates $X$ does not introduce bias, but instead affects only the precision of the estimated treatment effect by inflating its standard error. If the outcome is conditionally mean-independent of $X$ given $A$ (given by $\theta_X = 0$), including $X$ is unnecessary for unbiasedness and may reduce efficiency \citep{wooldridge2016should}. In practice, however, the assumption of no unobserved confounding is difficult to verify, and bias from omitted variables may persist despite adjustment using imperfect proxies.

A well-known illustration with substantial multicollinearity is the Equality of Educational Opportunity (Coleman) Report, which found that after controlling for students’ family background and proxies for socioeconomic status, differences in measured school resources like funding, teacher qualifications, and available facilities, had limited explanatory power for academic achievement, compared to family background and socioeconomic status \citep{coleman1968equality}. 

These conclusions have since been challenged by modern experimental and \sloppy{quasi-experimental} evidence showing that specific school inputs can have meaningful causal effects on achievement, including class-size reductions and teacher quality \citep{angrist1999using, chetty2011does}. One possible reconciliation of this paradox is that socioeconomic status is a latent construct that is only imperfectly captured by observed proxies such as parental education, family income, and home resources. When these proxies are highly correlated with school resources, strong multicollinearity combined with proxy measurement error can increase the sensitivity of causal estimates to residual unmeasured confounding, potentially exacerbating omitted-variable bias in observational analyses such as those in the Coleman Report. 

\subsection{The Smoking Example: Data from NHANES}

The issue of multicollinearity in measured covariates has been particularly salient in the evolution of smoking behavior over the last few decades. Socioeconomic gradients have long been associated with heterogeneity in smoking exposure, and the cigarette epidemic is commonly understood to have evolved through four stages \citep{lopez1994descriptive, mackenbach2006health}. \citet{mackenbach2006health} describes smoking as following a diffusion pattern across socioeconomic groups: early adoption occurs primarily among higher–socioeconomic-status men, but as smoking spreads more broadly, it becomes increasingly evenly distributed across the population, weakening socioeconomic gradients. In later stages, smoking prevalence declines first among more advantaged groups while remaining concentrated among less affluent populations, eventually becoming predominantly a habit of lower socioeconomic groups. These patterns have been empirically documented in subsequent studies \citep{pampel2005diffusion, vedoy2014tracing, quirmbach2016gender, di2019smoking, garrett2019socioeconomic}. 

To aid this discussion, we analyze data from the National Health and Nutrition Examination Survey (NHANES), a large-scale program designed to assess the health and nutritional status of adults and children in the United States \citep{nhanes}. To examine changes in smoking exposure over time, we focus on two survey periods for which NHANES data are available: 1971–1974 (NHANES I) and 2015–2016. We fit logistic propensity score models following \citet{rosenbaum1983central} for current smoking status in each period, adjusting for age at interview, race, highest grade completed, and poverty index, and stratifying by sex. Using NHANES I data, the model achieves a C-statistic of 0.61 for males and 0.65 for females; in contrast, the corresponding models using 2015–2016 NHANES data yield a C-statistic of approximately 0.70 for both sexes. This increase in predictive performance indicates that smoking status has become more strongly determined by observed demographic and socioeconomic covariates over time, suggesting that multicollinearity between smoking exposure and confounders is more pronounced today than it was fifty years ago.
\begin{table}[!ht]
\centering
\caption{Logistic regression results in both time periods, stratified by sex}
\label{tab:logit_sex_period}
\small
\setlength{\tabcolsep}{9pt}
\begin{threeparttable}
\begin{tabular}{lcccc}
\toprule
& \multicolumn{2}{c}{NHANES 1971-74} & \multicolumn{2}{c}{NHANES 2015-16} \\
\cmidrule(lr){2-3}\cmidrule(lr){4-5}
& Male & Female & Male & Female \\
\midrule
Intercept
& 2.063*** & 1.138*** & 2.449*** & 3.465*** \\
& (0.205)  & (0.175)  & (0.436)  & (0.566)  \\
\\
Age at interview
& -0.029*** & -0.037*** & -0.034*** & -0.039*** \\
& (0.003)   & (0.002)   & (0.005)   & (0.006)   \\
\\
Race: Black
& 0.174*    & 0.134     & 0.720***  & 0.063     \\
& (0.104)   & (0.082)   & (0.188)   & (0.209)   \\
\\
Race: Neither White nor Black
& 0.331     & -1.015**  & -0.371**  & -0.261    \\
& (0.347)   & (0.399)   & (0.153)   & (0.190)   \\
\\
Highest education grade
& -0.051*** & -0.034**  & -0.017    & -0.059*   \\
& (0.011)   & (0.011)   & (0.026)   & (0.034)   \\
\\
Poverty index
& -0.0004*  & 0.0007*** & -0.290*** & -0.315*** \\
& (0.0002)  & (0.0002)  & (0.049)   & (0.060)   \\
\bottomrule
\end{tabular}
\begin{tablenotes}[flushleft]
\footnotesize
\item Notes: Entries are log-odds coefficients from logistic regressions. Standard errors in parentheses.
\(\,^{***}p<0.01\), \(\,^{**}p<0.05\), \(\,^{*}p<0.10\).
\end{tablenotes}
\end{threeparttable}
\end{table}

Consistent with this observation, patterns in the logistic regression coefficients in Table \ref{tab:logit_sex_period} indicate that smoking is more tightly associated with socioeconomic and demographic characteristics in the contemporary data than in the early 1970s. Together, these findings motivate the \sloppy{real-data} counterpart to the conundrum posed in Figure \ref{hypo_studies}: are effect estimates of smoking on lung cancer today more sensitive to potential unmeasured confounding than they were in earlier periods?

\section{The Effect of Residual Confounding on Sensitivity}

\subsection{Implications on Omitted Variable Bias}
Consider an observed outcome $Y$, exposure of interest $A$, measured covariate $X$ and unmeasured covariate $U$, governed by the following structural equation \footnote{A structural equation specifies how a variable is generated as a function of other variables and an error term, encoding assumed causal relationships in the data-generating process \citep{goldberger1991course}.}. 
\begin{align}
    Y &= \alpha_0 + \beta A + \theta_X X+ \gamma U + \varepsilon_Y; \ \ \E[\varepsilon_Y|A,X, U] = 0. \label{ysem}
\end{align}
Furthermore, suppose the measured covariate $X$ is a proxy for $U$, given by 
\begin{equation} X = U + \varepsilon_X; \ \ \E[\varepsilon_X|U] = 0, \varepsilon_Y\indep \varepsilon _X. \label{xsem}\end{equation}
In addition, we also impose the following restriction.
\begin{assumption} \label{nocom}
    $ \text{Cov}(A,\varepsilon_X) = 0.  $
\end{assumption}
In simple words, Assumption \ref{nocom} is satisfied if $A$ and $X$ do not share any common cause except $U$. Note that, if a linear structural equation also holds for $A$, then Assumption \ref{nocom} also implies that $X$ is not a direct cause of $A$, and the only association between $A$ and $X$ occurs through $U$.   

\begin{proposition} \label{main}
    Suppose Equations \eqref{ysem} and \eqref{xsem} hold, and all random variables have finite second moments. Let $\beta_{Y\sim A,X}$ denote the coefficient of $A$ in the linear regression of $Y\sim A,X$. Then, under Assumption \ref{nocom}
    $$\beta_{Y\sim A,X} - \beta = \gamma\dfrac{\Var(\varepsilon_X)\beta_{A\sim X}}{\Var(A)(1-R^2_{A\sim X})},$$ where $\beta_{A\sim X}$ and $R^2_{A\sim X}$ are the coefficient of $X$ and the coefficient of determination respectively, in a linear regression of $A$ on $X$.
\end{proposition}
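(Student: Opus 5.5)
The plan is to reduce everything to a population Frisch--Waugh--Lovell computation, after eliminating the unmeasured $U$ using the proxy equation. Substituting $U = X - \varepsilon_X$ from \eqref{xsem} into \eqref{ysem} gives
\begin{equation*}
Y = \alpha_0 + \beta A + (\theta_X + \gamma) X - \gamma\,\varepsilon_X + \varepsilon_Y .
\end{equation*}
This writes $Y$ as a linear function of the regressors $A$ and $X$ plus a composite error $-\gamma\varepsilon_X + \varepsilon_Y$. The bias in $\beta_{Y\sim A,X}$ will then come entirely from the part of this error that is correlated with $A$ after partialling out $X$.

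Concretely, let $\tilde A = A - \E[A] - \beta_{A\sim X}(X - \E[X])$ be the population residual from the linear regression of $A$ on $X$. By construction $\Cov(\tilde A, X) = 0$. Its variance is
\begin{equation*}
\Var(\tilde A) = \Var(A) - \beta_{A\sim X}^2 \Var(X) = \Var(A)\,(1 - R^2_{A\sim X}).
\end{equation*}
By Frisch--Waugh--Lovell, $\beta_{Y\sim A,X} = \Cov(Y,\tilde A)/\Var(\tilde A)$. We need $R^2_{A\sim X} < 1$ so that this is well defined; this is implicit in the statement. Finite second moments guarantee all the covariances involved exist.

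I would then expand $\Cov(Y,\tilde A)$ term by term.
\begin{itemize}
\item $\Cov(\beta A, \tilde A) = \beta \Var(\tilde A)$, because $\tilde A$ is orthogonal to $X$.
\item $\Cov\big((\theta_X + \gamma) X, \tilde A\big) = 0$.
\item $\Cov(\varepsilon_Y, \tilde A) = 0$, because $\tilde A$ is a linear function of $(A, X)$, hence of $(A, X, U)$, and $\E[\varepsilon_Y \mid A, X, U] = 0$.
\item The remaining term is
\begin{equation*}
-\gamma \Cov(\varepsilon_X, \tilde A) = -\gamma\big[\Cov(\varepsilon_X, A) - \beta_{A\sim X}\Cov(\varepsilon_X, X)\big].
\end{equation*}
Assumption \ref{nocom} kills $\Cov(\varepsilon_X, A)$. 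Also $\Cov(\varepsilon_X, X) = \Cov(\varepsilon_X, U) + \Var(\varepsilon_X) = \Var(\varepsilon_X)$, since $\E[\varepsilon_X \mid U] = 0$ implies $\Cov(\varepsilon_X, U) = 0$. So this term equals $\gamma\,\beta_{A\sim X}\Var(\varepsilon_X)$.
\end{itemize}
Dividing by $\Var(\tilde A) = \Var(A)(1 - R^2_{A\sim X})$ yields the claimed expression.

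The only step needing care is justifying the vanishing of the cross terms. One is $\Cov(\varepsilon_Y, \tilde A) = 0$, which uses the conditional mean-zero condition in \eqref{ysem} rather than the stated independence $\varepsilon_Y \indep \varepsilon_X$. The other is $\Cov(\varepsilon_X, U) = 0$, which comes from \eqref{xsem}. Everything else is routine algebra.
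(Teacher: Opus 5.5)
Your proof is correct and is essentially the paper's argument. The paper computes the population OLS coefficient by explicit $2\times 2$ inversion, isolates the bias $\gamma\,[\Var(X)\Cov(A,U)-\Cov(A,X)\Cov(U,X)]/[\Var(A)\Var(X)-\Cov(A,X)^2]$, and only then substitutes $U = X-\varepsilon_X$; you substitute first and use the Frisch--Waugh--Lovell residual form, which is the same formula with numerator and denominator divided by $\Var(X)$, and both reach $\gamma\,[\Var(\varepsilon_X)\beta_{A\sim X}-\Cov(A,\varepsilon_X)]/(\Var(A)(1-R^2_{A\sim X}))$ before Assumption~\ref{nocom} removes the second term.
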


Proposition \ref{main}, the proof of which can be found in Appendix A, decomposes the bias into three distinct components: (i) $\gamma$, the confounding strength of $U$ on $Y$, (ii) $\Var(\varepsilon_X)$, which encapsulates the quality of $X$ as a proxy of $U$, and (iii) the observable strength of collinearity between $A$ and $X$, quantified by $\beta_{A\sim X}/(\Var(A)(1-R^2_{A\sim X}))$.
In particular, as the association between $X$ and $A$ strengthens, reflected in a larger effect size $\beta_{A\sim X}$, while the residual variance of $A$ (i.e., $\Var(A)(1-R^2_{A\sim X})$) remains stable, the discrepancy between the estimated coefficient $\beta_{Y\sim A,X}$ and the true causal parameter $\beta$ increases. This amplification heightens the sensitivity of causal conclusions to residual unmeasured confounding.

Notably, Proposition \ref{main} imposes no structural assumptions on the relationship between $A$ and $U$ beyond Assumption \ref{nocom}, and is agnostic to the functional form governing this association. The bias amplification it describes depends only on observable OLS regression coefficients of $A\sim X$, regardless of whether such a linear relationship holds. Furthermore, Proposition \ref{main} does not require $\theta_X$ to be null, and therefore its implications remain valid even when $X$ functions as a negative control outcome.\footnote{A negative control outcome is a variable that is not causally affected by the exposure of interest but may be associated with the underlying sources of bias.}

\subsection{Are smoking effect estimates today more sensitive to unmeasured confounding?}

Proposition \ref{main} offers a method to assess whether effect estimates of smoking on lung cancer from later data are more sensitive to unmeasured confounding. Let $Y$ denote lung-cancer incidence, $A$  smoking prevalence, $U$ the true unmeasured socioeconomic status and $X$ a measured proxy of $U$, such as the poverty index. We expect the structural equations governing $Y$ to be relatively stable across time periods, so that Equation \eqref{ysem} holds in both periods of our analysis. We make an analogous assumption for Equation \eqref{xsem}, arguing that the relationship between socioeconomic status and measured covariates such as race and poverty index has remained relatively stable over time. 


Note that the above conclusion is valid under Assumption \ref{nocom}, which excludes additional common causes between $A$ and $X$ beyond $U$. While violations of this assumption can complicate the relationship between $\beta_{Y\sim A,X}$ and $\beta$ (see Appendix A), the result remains most informative for covariates that plausibly act as proxies for latent socioeconomic status, such as the poverty index in the smoking application, and that are unlikely to directly affect smoking behavior.  With additional measured covariates, analogous conclusions hold conditional on those covariates.

\begin{table}[!ht]
\centering
\caption{Linear regression results in both time periods, stratified by sex}
\label{tab:lm_sex_period}
\small
\setlength{\tabcolsep}{9pt}
\begin{threeparttable}
\begin{tabular}{lcccc}
\toprule
& \multicolumn{2}{c}{NHANES 1971-74} & \multicolumn{2}{c}{NHANES 2015-16} \\
\cmidrule(lr){2-3}\cmidrule(lr){4-5}
& Male & Female & Male & Female \\
\midrule
Intercept
& 0.996*** & 0.715*** & 1.038*** & 1.269*** \\
& (0.048)  & (0.035)  & (0.093)  & (0.118)  \\
\\
Age at interview
& -0.0069*** & -0.0075*** & -0.0076*** & -0.0087*** \\
& (0.0006)   & (0.0004)   & (0.0010)   & (0.0013)   \\
\\
Race: Black
& 0.043*     & 0.026      & 0.160***   & 0.012      \\
& (0.025)    & (0.017)    & (0.041)    & (0.046)    \\
\\
Race: Neither White nor Black
& 0.079      & -0.195**   & -0.082**   & -0.060     \\
& (0.083)    & (0.070)    & (0.033)    & (0.042)    \\
\\
Highest education grade
& -0.012***  & -0.006***  & -0.003     & -0.013*   \\
& (0.0025)   & (0.0022)   & (0.0058)   & (0.0075)  \\
\\
Poverty index
& -0.00010*  & 0.00014*** & -0.064***  & -0.070*** \\
& (0.00005)  & (0.00004)  & (0.010)    & (0.013)   \\
\midrule\midrule
Residual variance $\hat\sigma^2$
& 0.239 & 0.206 & 0.219 & 0.220 \\
\bottomrule
\end{tabular}
\begin{tablenotes}[flushleft]
\footnotesize
\item Notes: Entries are OLS coefficients. Standard errors in parentheses.
\(\,^{***}p<0.01\), \(\,^{**}p<0.05\), \(\,^{*}p<0.10\).
\end{tablenotes}
\end{threeparttable}
\end{table}

Table \ref{tab:lm_sex_period} summarizes the linear regression coefficients for smoking uptake on measured covariates, including age at interview, race, highest education grade earned, and poverty index. The results indicate that the association between poverty index ($X$) and smoking uptake ($A$) has strengthened substantially over time, while the residual variance of smoking uptake has remained relatively stable. Under the assumption that poverty index does not directly affect smoking uptake except through latent socioeconomic status, we examine confidence intervals for the ratio of the poverty index coefficient to the residual variance across sexes and time periods as a diagnostic for differences in sensitivity to unmeasured confounding.

Conservative confidence intervals for the ratio $\beta_{A\sim X}/(\text{Var}(A)(1-R^2_{A\sim X}))$, controlling for all other measured covariates, are reported in Table \ref{bias_ratio}. The lack of overlap between these intervals across time periods, consistent across sexes, illustrates how the sensitivity of smoking–lung cancer effect estimates to unmeasured confounding has heightened over time. More pertinently, this exercise highlights the need for heightened caution in assessing the sensitivity of causal effects in settings where proxies for latent sources of bias become increasingly strongly associated with the exposure of interest.

\begin{table}[!htbp]
\centering
\caption{Confidence intervals for the ratio of the regression coefficient of the poverty index and regression residual variance of smoking uptake on measured covariates.}
\label{bias_ratio}
\small
\setlength{\tabcolsep}{10pt}
\begin{threeparttable}
\begin{tabular}{lcc}
\toprule
& Male & Female \\
\midrule
NHANES 1971-74
& $[-7.87\times10^{-4},\,1.82\times10^{-5}]$
& $[2.74\times10^{-4},\,1.12\times10^{-3}]$ \\
NHANES 2015-16
& $[-0.350,\,-0.215]$
& $[-0.387,\,-0.226]$ \\
\bottomrule
\end{tabular}
\begin{tablenotes}[flushleft]
\footnotesize
\item Notes: Entries are conservative 95\% confidence intervals, constructed by combining the Wald confidence interval for
${\beta}_{A\sim X}$ with the chi-square confidence interval for the residual variance in $A$, after controlling for other measured covariates.
\end{tablenotes}
\end{threeparttable}
\end{table}

\section{Discussion}

Sensitivity analysis serves as an important tool for assessing the credibility of causal claims, but it cannot substitute for careful consideration of measured covariates. While high association between treatment uptake and observed covariates has traditionally been viewed as a problem of statistical power, there has been limited attention on its implications for validity and sensitivity. In this article, we argue that when latent confounders are difficult to adjust for directly and are instead controlled for using proxies, stronger associations between the exposure and measured proxies imply effect estimates that are more prone to exacerbated omitted-variable bias.

We show that the ratio of the measured covariate coefficient in the exposure model to the residual variance of the exposure provides a natural, observable metric for evaluating this added sensitivity in linear regression based analyses. Applying this framework to smoking, we demonstrate that smoking exposure has become increasingly intertwined with socioeconomic factors over time, and the increased multicollinearity makes causal assertions from more recent data less robust to unmeasured confounding. We hope this work serves as a cautionary note when assessing the sensitivity of causal conclusions to potential unmeasured confounding, particularly when strong multicollinearity exists between exposure and covariates intended to proxy latent confounders.

\subsection*{Acknowledgment}{The authors are grateful to the participants of the Wharton Causal Data Science Lab for their insightful discussions and thoughtful feedback.}

\newpage

\appendix

\section*{Appendix A: Proof of Proposition \ref{main}} 
\label{app:theorem}
\proof

Let $(\beta_{Y\sim A, X}, \theta_{Y\sim A, X})$ denote the OLS regression coefficients when regressing $Y\sim A,X$. Under Model \eqref{ysem}, we have,
\begin{align*}
    \begin{bmatrix}
        \beta_{Y\sim A, X} \\ \theta_{Y\sim A,X}
    \end{bmatrix}
    &= \begin{bmatrix}
        \Var(A) & \text{Cov}(A,X) \\
        \text{Cov}(A,X) & \Var(X)
    \end{bmatrix}^{-1} \begin{bmatrix}
     \text{Cov}(A,Y) \\ \text{Cov}(X,Y)   
    \end{bmatrix} \\
    &= \dfrac{1}{\text{Var}(A)\Var(X) - \Cov(A,X)^2}\begin{bmatrix}
        \Var(X) & -\text{Cov}(A,X) \\
        -\text{Cov}(A,X) & \Var(A)
    \end{bmatrix} \begin{bmatrix}
     \text{Cov}(A,Y) \\ \text{Cov}(X,Y).   
    \end{bmatrix}
\end{align*}
Thus, 
\begin{equation} 
\beta_{Y\sim A, X} = \dfrac{\Var(X)\Cov(A,Y) - \Cov(A,X)\Cov(X,Y)}{\text{Var}(A)\Var(X) - \Cov(A,X)^2}. \label{betamid}
\end{equation}
Expanding on the right-hand side of Equation \eqref{betamid}, 
\begin{align*} 
& \Var(X)\Cov(A,Y) - \Cov(A,X)\Cov(X,Y) \\ 
&= \Var(X)[\beta \Var(A) + \theta_X\Cov(A,X) + \gamma \Cov(A,U)] \\
&\hspace{2em} - \Cov(A,X)[\beta \Cov(A,X) + \theta_X \Var(X) + \gamma \Cov(U,X)] \\
&= \beta [\Var(A)\Var(X) - \Cov(A,X)^2] + \gamma [\Var(X)\Cov(A,U) - \Cov(A,X)\Cov(U,X)].
\end{align*}
Thus, from Equation \eqref{betamid}, we have,
\begin{equation}
    \beta_{Y\sim A, X} - \beta = \gamma \dfrac{\Var(X)\Cov(A,U) - \Cov(A,X)\Cov(U,X)}{\Var(A)\Var(X) - \Cov(A,X)^2}. \label{diffmid}
\end{equation}
Continuing to expand on the numerator on the right-hand side of Equation \eqref{diffmid} yields 
\begin{align*}
    & \Var(X)\Cov(A,U) - \Cov(A,X)\Cov(U,X) \\
    &= (\Var(\varepsilon_X) + \Var(U))\Cov(A,U) - (\Cov(A,U) + \Cov(A,\varepsilon_X))\Var(U) \\
    &= \Var(\varepsilon_X)\Cov(A,U) - \Cov(A,\varepsilon_X)\Var(U) \\
    &= \Var(\varepsilon_X)[\Cov(A,X) - \Cov(A,\varepsilon_X)] - \Cov(A,\varepsilon_X)[\Var(X) - \Var(\varepsilon_X)]\\
    &= \Var(\varepsilon_X)\Cov(A,X) - \Cov(A,\varepsilon_X)\Var(X).
\end{align*}
Thus, a general form of the bias expression of $\beta_{A\sim X}$ from Equation \eqref{diffmid} is given by
\begin{align}
    \beta_{Y\sim A, X} - \beta &= \gamma \dfrac{\Var(\varepsilon_X)\Cov(A,X) - \Cov(A,\varepsilon_X)\Var(X)}{\Var(A)\Var(X) - \Cov(A,X)^2} \nonumber \\
    &= \gamma \dfrac{\Var(\varepsilon_X)\beta_{A\sim X} - \Cov(A,\varepsilon_X)}{\Var(A)(1-R^2_{A\sim X})}. \label{alfinal}
\end{align}
Note that, without Assumption \ref{nocom}, the direction of the omitted variable bias is unclear, with the precise direction depending not only on the observed measured strength of confounding $\beta_{A\sim X}$ but also on the strength of $X$'s direct effect on $A$ (not mediated by $U$) relative to the precision of $X$ as a proxy for $U$. Finally, under Assumption \ref{nocom}, Equation \eqref{alfinal} reduces to the expression in Proposition \ref{main}. \hfill $\blacksquare$ 





\section*{Appendix B: Data generation for Figure \ref{hypo_studies}}
The following data snippet produces the results in Figure \ref{hypo_studies}. Clearly, Study 1 has higher residual confounding, and thus produces more biased estimates than Study 2. 

\begin{verbatim}
library(sensemakr); library(ggplot2)
set.seed(2026)

u1 = rnorm(1000)
x1 = u1 + 0.5*rnorm(1000)
a1 = 2*u1 + 0.05*rnorm(1000)
y1 = 2.4*a1 + 2*u1 + 0*x1 + 1.5*rnorm(1000)

u2 = rnorm(1000)
x2 = u2 + 0.5*rnorm(1000)
a2 = 0.5*u2 + 0.8*rnorm(1000)
y2 = 3*a2 + 2*u2 + 0*x2 + rnorm(1000)

df1 <- data.frame(a = a1, x = x1, Study = "Study 1")
df2 <- data.frame(a = a2, x = x2, Study = "Study 2")
df  <- rbind(df1, df2)
xlim_all <- range(df$x); ylim_all <- range(df$a)

ggplot(df, aes(x = x, y = a)) +
  geom_point(alpha = 0.3) +
  geom_smooth(method = "lm", se = FALSE, color = "red") +
  facet_wrap(~ Study) +
  coord_cartesian(xlim = xlim_all, ylim = ylim_all) +
  labs(x = "Covariate x", y = "Exposure a") +
  geom_hline(yintercept = 0, linetype = "dashed") +
  theme_bw() +
  theme(strip.text = element_text(size = 12))

print_res = function(i){
  if(i== 1) {u = u1; x = x1; a = a1; y = y1}
  if(i== 2) {u = u2; x = x2; a = a2; y = y2}
  mod = lm(y ~ a + x); expmod = lm(a ~ x)
  print(summary(mod))
  sensemakr(mod, treatment = "a")
}

print_res(1)
print_res(2)
\end{verbatim}

\vskip 0.2in
\bibliography{sample}

@article{cornfield1959smoking,
  title={Smoking and lung cancer: recent evidence and a discussion of some questions},
  author={Cornfield, Jerome and Haenszel, William and Hammond, E Cuyler and Lilienfeld, Abraham M and Shimkin, Michael B and Wynder, Ernst L},
  journal={Journal of the National Cancer institute},
  volume={22},
  number={1},
  pages={173--203},
  year={1959},
  publisher={Oxford University Press}
}

@article{rosenbaum1983central,
  title={The central role of the propensity score in observational studies for causal effects},
  author={Rosenbaum, Paul R and Rubin, Donald B},
  journal={Biometrika},
  volume={70},
  number={1},
  pages={41--55},
  year={1983},
  publisher={Oxford University Press}
}

@article{lopez1994descriptive,
  title={A descriptive model of the cigarette epidemic in developed countries},
  author={Lopez, Alan D and Collishaw, Neil E and Piha, Tapani},
  journal={Tobacco control},
  volume={3},
  number={3},
  pages={242},
  year={1994}
}

@book{mackenbach2006health,
  title={Health inequalities: Europe in profile},
  author={Mackenbach, Johan P and others},
  year={2006},
  publisher={Produced by COI for the Department of Health}
}

@article{zhang2025general,
  title={A general condition for bias attenuation by a nondifferentially mismeasured confounder},
  author={Zhang, Jeffrey and Lee, Junu},
  journal={Biometrika},
  pages={asaf026},
  year={2025},
  publisher={Oxford University Press}
}

@article{garrett2019socioeconomic,
  title={Socioeconomic differences in cigarette smoking among sociodemographic groups},
  author={Garrett, Bridgette E and Martell, Brandi N and Caraballo, Ralph S and King, Brian A},
  journal={Preventing chronic disease},
  volume={16},
  pages={E74},
  year={2019}
}

@article{chetty2011does,
  title={How does your kindergarten classroom affect your earnings? Evidence from Project STAR},
  author={Chetty, Raj and Friedman, John N and Hilger, Nathaniel and Saez, Emmanuel and Schanzenbach, Diane Whitmore and Yagan, Danny},
  journal={The Quarterly journal of economics},
  volume={126},
  number={4},
  pages={1593--1660},
  year={2011},
  publisher={MIT Press}
}

@article{angrist1999using,
  title={Using Maimonides' rule to estimate the effect of class size on scholastic achievement},
  author={Angrist, Joshua D and Lavy, Victor},
  journal={The Quarterly journal of economics},
  volume={114},
  number={2},
  pages={533--575},
  year={1999},
  publisher={MIT Press}
}

@article{wooldridge2016should,
  title={Should instrumental variables be used as matching variables?},
  author={Wooldridge, Jeffrey M},
  journal={Research in Economics},
  volume={70},
  number={2},
  pages={232--237},
  year={2016},
  publisher={Elsevier}
}

@article{pampel2005diffusion,
  title={Diffusion, cohort change, and social patterns of smoking},
  author={Pampel, Fred C},
  journal={Social science research},
  volume={34},
  number={1},
  pages={117--139},
  year={2005},
  publisher={Elsevier}
}

@article{vedoy2014tracing,
  title={Tracing the cigarette epidemic: An age-period-cohort study of education, gender and smoking using a pseudo-panel approach},
  author={Ved{\o}y, Tord F},
  journal={Social Science Research},
  volume={48},
  pages={35--47},
  year={2014},
  publisher={Elsevier}
}

@article{quirmbach2016gender,
  title={Gender, education and Russia’s tobacco epidemic: A life-course approach},
  author={Quirmbach, Diana and Gerry, Christopher J},
  journal={Social Science \& Medicine},
  volume={160},
  pages={54--66},
  year={2016},
  publisher={Elsevier}
}

@article{di2019smoking,
  title={The smoking epidemic across generations, genders, and educational groups: A matter of diffusion of innovations},
  author={Di Novi, Cinzia and Marenzi, Anna},
  journal={Economics \& Human Biology},
  volume={33},
  pages={155--168},
  year={2019},
  publisher={Elsevier}
}

@article{coleman1968equality,
  title={The concept of equality of educational opportunity},
  author={Coleman, James},
  journal={Harvard educational review},
  volume={38},
  number={1},
  pages={7--22},
  year={1968},
  publisher={Harvard Education Publishing Group}
}

@book{wooldridge2010econometric,
  title={Econometric analysis of cross section and panel data},
  author={Wooldridge, Jeffrey M},
  year={2010},
  publisher={MIT press}
}

@article{farrar1967multicollinearity,
  title={Multicollinearity in regression analysis: the problem revisited},
  author={Farrar, Donald E and Glauber, Robert R},
  journal={The review of economic and statistics},
  pages={92--107},
  year={1967},
  publisher={JSTOR}
}

@misc{nhanes,
title={{National Health and Nutrition Examination Survey Data}},
author = {{Centers for Disease Control and Prevention (CDC)}},
howpublished = {National Center for Health Statistics (NCHS)},
institution  = {U.S. Department of Health and Human Services},
  address      = {Hyattsville, MD},
  year         = {1971-74, 2015-16},
  url          = {https://www.cdc.gov/nchs/nhanes/index.htm}
}

@article{hammond1964smoking,
  title={Smoking in relation to mortality and morbidity. Findings in first thirty-four months of follow-up in a prospective study started in 1959},
  author={Hammond, E Cuyler},
  journal={Journal of the National Cancer Institute},
  volume={32},
  number={5},
  pages={1161--1188},
  year={1964},
  publisher={Oxford University Press}
}

@article{tan2006distributional,
  title={A distributional approach for causal inference using propensity scores},
  author={Tan, Zhiqiang},
  journal={Journal of the American Statistical Association},
  volume={101},
  number={476},
  pages={1619--1637},
  year={2006},
  publisher={Taylor \& Francis}
}

@article{cinelli2020making,
  title={Making sense of sensitivity: Extending omitted variable bias},
  author={Cinelli, Carlos and Hazlett, Chad},
  journal={Journal of the Royal Statistical Society Series B: Statistical Methodology},
  volume={82},
  number={1},
  pages={39--67},
  year={2020},
  publisher={Oxford University Press}
}

@book{goldberger1991course,
  title={A course in econometrics},
  author={Goldberger, Arthur Stanley},
  year={1991},
  publisher={Harvard University Press}
}

@article{grodstein2001postmenopausal,
  title={Postmenopausal hormone use and secondary prevention of coronary events in the Nurses' Health Study: a prospective, observational study},
  author={Grodstein, Francine and Manson, JoAnn E and Stampfer, Meir J},
  journal={Annals of internal medicine},
  volume={135},
  number={1},
  pages={1--8},
  year={2001},
  publisher={American College of Physicians}
}

@article{vlajinac1997effect,
  title={Effect of caffeine intake during pregnancy on birth weight},
  author={Vlajinac, Hristina D and Petrovi{\'c}, Radmila R and Marinkovi{\'c}, Jelena M and {\v{S}}ipeti{\'c}, Sandra B and Adanja, Benko J},
  journal={American journal of epidemiology},
  volume={145},
  number={4},
  pages={335--338},
  year={1997},
  publisher={Oxford University Press}
}

@book{rutter2007identifying,
  title={Identifying the Environmental Causes of Disease: How Should We Decide what to Believe and when to Take Action?: Report Synopsis},
  author={Rutter, Michael},
  year={2007},
  publisher={Academy of Medical Sciences}
}

@book{rosenbaum2010design,
  title={Design of observational studies},
  author={Rosenbaum, Paul R and Rosenbaum, P and Briskman},
  volume={10},
  year={2010},
  publisher={Springer}
}
\end{document}